\documentclass[reqno, centertags, 11pt]{amsart}
\usepackage{amsmath,amsthm,amsfonts,amssymb}
\usepackage{url}
\usepackage{graphicx}
\usepackage{epstopdf}
\usepackage[dvipsnames]{xcolor}
\usepackage{xcolor}
\usepackage{comment}

\DeclareMathOperator*{\argmax}{arg\,max}

%---------begin preamble
%-----parenthesis and bra-kets
\renewcommand{\part}[1]{\left(#1\right)}  %--round
  %---square
  %---braces
  %---use: \ket{\psi}
 %--double ket
\newcommand{\Tr}{{ \rm tr}}       %---Trace
%---------------------------
\newcommand{\Prob}{{ \rm p}}   

%-----------------------------------
\newcommand{\N}{\mathbb{N}}
\newcommand{\C}{\mathbb{C}}
\newcommand{\R}{\mathbb{R}}
%-----------------------------other commands

\newcommand{\TrSet}[0]{\mathcal{S}_{\mathrm{tr}}}

\newcommand{\Real}[0]{\mathbb{R}}
\newcommand{\TrQSet}[0]{\mathcal{S}_{\mathrm{Qtr}}}

%-------------------For commenting and text editing

\begin{document}
%---------------------------------
%\newtheorem{comment}{Comment}[section]
\newtheorem{notation}{Notation}[section]
\newtheorem{theorem}{Theorem}[section]
\newtheorem{lemma}{Lemma}[section]
\newtheorem{claim}{\rm{\it Claim}}[section]
\newtheorem{proposition}{Proposition}[section]
\newtheorem{corollary}{Corollary}[section]
\newtheorem{definition}{Definition}[section]
\newtheorem{example}{Example}[section]
\newtheorem{remark}{Remark}[section]
\newtheorem{problem}{Problem}[section]
\numberwithin{equation}{section}

\newenvironment{proofof}[2][Proof]%
{\begin{proof}[#1]%
\renewcommand{\qed}{\rule{0mm}{1mm}\hfill{\footnotesize\textsc{q.e.d.}\
(#2)}}}
{\end{proof}
}

%-----------------------

% \begin{document}
   \title[Quantum state discrimination for supervised classification]
 {Quantum state discrimination for supervised classification 
   }
%   \title[Quantum state discrimination for supervised classification]
%  {\textbf{Quantum state discrimination for supervised classification 
%   }}

\author[Giuntini]{Roberto Giuntini}
\address[R.~Giuntini]{University of Cagliari,
    Via Is Mirrionis 1, I-09123 Cagliari, Italy.}
    \email{giuntini@unica.it}

\author[Freytes]{Hector Freytes}
\address[H.~Freytes]{University of Cagliari,
    Via Is Mirrionis 1, I-09123 Cagliari, Italy.}
    \email{hfreytes@gmail.com}
    
  \author[Park]{Daniel K. Park}
 \address[D.K.~Park]{Sungkyunkwan University Advanced Institute of Nanotechnology, Suwon, Korea.}
    \email{dkp.quantum@gmail.com}
    
      \author[Blank]{Carsten Blank}
      \address[C.~Blank]{data cybernetics, Landsberg, Germany}
    \email{blank@data-cybernetics.com}
      
       \author[Holik]{Federico Holik}
       \address[F.~Holik]{University of La Plata, Physics Institute}
    \email{olentiev2@gmail.com}

    \author[Chow]{Keng Loon Chow}
\address[K.L.~Chow]{University of Cagliari,
    Via Is Mirrionis 1, I-09123 Cagliari, Italy.}
    \email{leo.chow11@gmail.com}
    
    \author[Sergioli]{Giuseppe Sergioli}
\address[G.~Sergioli]{University of Cagliari,
    Via Is Mirrionis 1, I-09123 Cagliari, Italy.}
    \email{giuseppe.sergioli@gmail.com}

%--------

\maketitle
\begin{abstract}

In this paper we investigate the connection between quantum information theory and machine learning. In particular, we show how quantum state discrimination can represent a useful tool to address the standard classification problem in machine learning. Previous studies have shown that the optimal quantum measurement theory developed in the context of quantum information theory and quantum communication can inspire a new binary classification algorithm that can achieve higher inference accuracy for various datasets. Here we propose a model for arbitrary  multiclass classification inspired by quantum state discrimination, which is enabled by encoding the data in the space of linear operators on a Hilbert space. While our algorithm is quantum-inspired, it can be implemented on classical hardware, thereby permitting immediate applications.
%I've changed the sentence below, because some journals often require that the citation is not included in the abstract (although this is not a universal rule).
%In \cite{Plos} we have shown how the \emph{Helstrom mesurement} can be used for binary classification; here we generalize that result showing how the \emph{Pretty Good mesurement} can be used to design a model for arbitrary $n$-ary classification.

\end{abstract}

% \section*{INDEX}
% \begin{enumerate}
% % \item General setting for $n$-ary classification
% \item General setting for supervised classification
% \item Quantum classifiers
%     \subitem (2.1) Basic framework
%     \subitem (2.2) Encoding in a larger Hilbert space
% \item Classification inspired by quantum state discrimination
% \subitem (3.1) Quantum state discrimination 
%     \subitem (3.2) {\em Helstrom measurement} and binary classification
%     \subsubitem (3.2.1) Advantages of encoding in a larger Hilbert space
%     \subitem (3.3) {\em Pretty Good measurement} and $n$-ary clasification 

% %\item OvO, OvR and PGM classifier. A comparison.

% \item Conclusions
% \end{enumerate}
% \begin {itemize}
% \item Appendix I: Quantum feature maps (encodings) 
% \item Appendix II: Helstrom bound and tensor copies. Proof.
% \item Appendix III: Mathematical properties of the pseudoinverse.
% \item Appendix IV: PGM bound and tensor copies. A special example.
% \end{itemize}

% I've replaced the INDEX section above with the latex auto-generated Table of Contents. However, we should remove it when we submit it to a journal.
\tableofcontents

\section{Introduction}

Quantum theory constitutes a new paradigm for information processing and provides unconventional ways to address computational problems. Advances in quantum computing have led to the development of algorithms that utilize quantum hardware to solve certain problems dramatically faster than any foreseeable classical hardware~\cite{10.2307/2899535,zalka1998simulating,shor1999polynomial,PhysRevLett.103.150502_HHL}. A commercially relevant family of problems for which the application of quantum algorithm promises certain computational benefits are found in the domain of machine learning. This gave birth to the new discipline known as quantum machine learning (QML). Several quantum machine learning algorithms have been proposed with clear quantum advantages~\cite{PhysRevLett.113.130503_qSQVM,qPCA,PhysRevA.94.022342,PhysRevA.97.042315,blank_quantum_2020,PARK2020126422}. However, the practical application of these algorithms are limited by the development of quantum hardware, which remains a long-term prospect.

Advances in quantum computing have led to another intriguing stream of research which aims to develop new classical algorithms inspired by quantum information processing to outperform existing methods, namely quantum-inspired classical algorithms~\cite{10.1145/3313276.3316310,tang2019quantuminspired,Arrazola2020quantuminspired}. The implications of this approach is significant not only for the domain of computational complexity theory, but also for practical applications. The complexity of implementing quantum machine learning algorithms on a quantum hardware and the emergence of quantum-inspired classical algorithms motivate the development of quantum-inspired machine learning (QIML)~\cite{holik_pattern_2018}. In principle, QIML deals only with ``mathematically quantum objects'': objects that are formally represented by different elements of the quantum formalism (such as density operators, also known as density matrices), but are not necessarily connected to actual quantum systems. Thus, the information stored in those objects can be formally managed by a classical computer. Recent findings show that the well-developed field of quantum state discrimination in quantum information theory and quantum communication can inspire new pattern recognition algorithms that can improve the binary classification accuracy of existing methods~\cite{Plos,IJTP,IJQI,sergioli_quantum-inspired_2021}.

In this work, we first explain the connection between quantum state discrimination and the quantum-inspired binary classification cited above, and then propose a quantum-inspired supervised machine learning algorithm for arbitrary multiclass classification. Our algorithm is based on using the mathematical framework of quantum mechanics to represent data and a quantum state discrimination technique known as \textit{Pretty Good measurement}. 
% We show that the optimal measurement strategy developed in the context of quantum information theory can improve the accuracy of the classification task that is performed on a classical device. 
We show the theoretical derivation of this measurement strategy in the context of multiclass classification tasks in machine learning. We also show the classification accuracy of this quantum-inspired multiclass classifier can be improved by increasing the number of copies of the quantum object that encodes the data, at the cost of increasing the computational time. Since our approach does not require quantum hardware, it can be immediately implemented on existing classical hardware.

The paper is organized as follows. In Section 1, we provide a brief introduction regarding the general setting for supervised classification. In Section 2, the idea of quantum-inspired classifiers are described in detail. Section 3 is devoted to show the connection between quantum state discrimination and supervised classification; in particular, we provide some significant comments about quantum-inspired binary classification and then we introduce a more general quantum-inspired multiclass classification. Some final comments close the paper. 

% \section{General setting for $n$-ary classification}\label{General}
\section{General setting for supervised classification}\label{General}

% {\em Supervised machine learning} 
Supervised classification is one of the most important branches in machine learning~\cite{hastie_elements_2009}. It essentially consists of designing algorithms which learns by example in order to
classify 
% new ``unseen''
objects. The term {\em supervised} refers to the intuitive idea that the entire process is supervised by an ``expert'' who first builds up a preliminary set of correctly classified objects and then,  on the basis of this dataset (called the {\em training dataset}), 
% ``extracts'' an algorithm 
an algorithm is applied which would then allow one to classify new ``unseen'' objects (or objects from the training dataset) as accurately as possible.

Objects are described by a sequence number of $d$ {\em features} 
% that the ``expert'' considers
considered to be sufficiently relevant to characterize the objects in
the classification framework.  More formally, any object $x$ is associated to a vector 
$\vec{x}$ (called {\em object-vector} or
{\em feature-vector}) of a $d$-dimensional Hilbert space $\mathcal H^d$
\footnote{Unlike the standard presentations in machine learning,  we do not exclude features which may 
be represented as complex numbers. The feature-vector can be the raw data itself, or can be obtained via feature mapping from a lower dimensional space.}.
%A {\em pattern} can be represented 
We define a pattern as a pair 
$$
(\vec{x_j}, \lambda_j)
$$ 
where $\vec{x_j}$ is a feature-vector and   
$\lambda_j$ is the class label which denotes the class with which the object is supposed to belong to.
For simplicity, we identify the set $L$ of all class labels with a finite 
sequence $(1,\ldots,\ell)$ of natural numbers that are in 
one-to-one correspondence with
the $\ell$ classes which the objects belong to. 
Thus, a training dataset can be represented as a set
$$
\TrSet:=\{(\vec x_1 ,\lambda_1),\ldots\,(\vec x_m,\lambda_m)\}.
$$
where $\lambda_j\in L$, $\forall j\in\{1,\ldots,m\}$.
Given any class label $i\in L$, we can define the set $\TrSet^i$ 
% of all $i$ {\em object-vectors} as the set
of all object-vectors whose associated class label is $i$:
\begin{equation}\label{stri}
\TrSet^i=\{\vec x_j \in\TrSet : \lambda_j=i\}.
\end{equation}
The cardinality of $\TrSet^i$ is denoted by 
% $|m_i|$. Clearly, $\sum_{i=1}^{m} |m_i|=m$.
$|\TrSet^i|$. Clearly, $\sum_{i=1}^{\ell} |\TrSet^i|=m$.

The task of supervised classification is to 
infer 
% ({\em tune}) 
a function 
% (a classifier) 
from the training dataset
$\TrSet$ using an algorithm (a classifier) 
% (or from a subset thereof)
which would classify
% unseen 
objects into one of the classes that the object is supposed to belong to, thus assigning a class label to an object-vector $\vec x$ as accurately as possible. 

Formally, a classifier can be defined as a map
$$
Cl: \C^d\,\to\,L. 
$$
 %---------------------------
There are several well known classification approaches in which different classifiers employ to classify
% (and accordingly different classifier-functions) to classify unseen 
objects, such as distance-based or probability-based classification approaches.
In this paper, we present the following probability-based classification approach. Given a training dataset
$$ \TrSet := \{(\vec x_1,\lambda_1),\cdots,(\vec{x}_m,\lambda_m)\}, $$
one defines a map that associates to any feature-vector $\vec x$ a sequence
of $\ell$-numbers  belonging to the unit real-interval $[0,1]\subset \R$
$$ f:\,\C^d\,\to [0,1]^\ell. $$
The $i^{th}$-component of $f(\vec{x})$  will be denoted by $f(\vec x)_i$.
% \\

The meaning of  $f(\vec x)_i$ depends on the function $f$. 
For example, if $f(\vec x)$ is assumed to be a probability-vector
(i.e. $\sum_{i=1}^\ell f(\vec x)_i=1$),
  the  value $f(\vec x)_i$ can be interpreted as the probability
that the object $x$ (with associated feature-vector $\vec x$) belongs to the class labelled by $i$.

The classifier determined by $f$ (or simply, the $f$-classifier) is the map 
$$ Cl_f\,:\C^d\,\to L $$ 
that assigns to any feature-vector $\vec x\in\C^d$
the class label that is associated to the  greatest value of 
$f(\vec x)_i$, with $1\le i\le\ell$.
In other words, 
$$
Cl_f(\vec x)=\argmax_i \{f(\vec x)_i\,:\,1\le i\le\ell\}.
$$

% In other terms, $Cl_f(\vec x)$ returns the label $j$ iff 
%  $f(\vec x)_j={\tt Max}\{f(\vec x)_i\,:\,1\le i\le\ell \}$.

Since it may happen that $f$ returns more than one class label when there are matching $f(\vec x)_i$ values,
we pose by convention

\begin{equation}\label{classifier}
Cl_f(\vec x):=\min\left\lbrace i\in L\, : \,f(\vec x)_i=\max_k\left\lbrace f(\vec x)_k\,,\,1\le k\le\ell \right\rbrace\right\rbrace.
\end{equation}
%Hereinafter, whenever the meaning is clear we will write $Cl$ instead of $Cl_f$.

A classifier $Cl_f$ is called probabilistic iff 
$\sum_{i=1}^\ell f(\vec x)_i=1\; \forall \vec x\in\C^d.$
In other words, a classifier is probabilistic iff the 
sequence $\left( f(\vec x)_1,\ldots,f(\vec x)_\ell\right)$ is a probability-vector for any $\vec x$.

\bigskip

%-------------

%------------------------
\section{Quantum classifiers}\label{Quantum Classifier}

Since object-vectors are defined in the Hilbert space, it is natural to treat it as a quantum state and apply mathematical techniques from quantum information theory. The underlying idea of this work is to encode object-vectors as density operators, and construct a classifier based on the quantum measurement technique developed in the realm of quantum information theory to optimally discriminate quantum states. In the following, we describe our encoding strategy for representing the set of object-vectors as density operators.

\subsection{Basic framework}\leavevmode

Given a training dataset, the construction of a quantum classifier is based on three fundamental steps: 
i) obtain a quantum feature map (or encoding) to encode the object-vectors of the training dataset into quantum objects (see Appendix~\ref{sec:appendixA});  
ii) find an appropriate function $f$ that determines the quantum classifier;
iii) apply the quantum classifier to the quantum encoded {\em object-vectors} to obtain class labels for the original (classical) objects.

Let us consider a training dataset
$\TrSet := \{(\vec x_1,\lambda_1),\cdots,(\vec x_m,\lambda_m)\}$. A quantum feature map (or encoding) is a map that associates
to any object-vector $\vec x$ of $\C^d$ a pure quantum state 
(called {\em object quantum-state}) $\rho_{\vec{x}}$ of a Hilbert space $\C^n$, whose dimension 
$n$ depends on the number of $d$ features. An example of quantum encoding is described in Appendix~\ref{sec:appendixA} (also see \cite{IJQI}). Without loss of generality, we use the density operator denoted by $\rho_{\vec{x}}$ rather than the state vector formalism to describe quantum states. Note that the density operator formalism follows naturally when we describe multiple object-vectors as mixed states in the following. Given a quantum encoding 
$\vec{x}\mapsto \rho_{\vec{x}}$,
a {\em quantum pattern} is any 
pair 
$$
(\rho_{\vec{x_j}},\lambda_j).
$$
A {\em quantum training dataset} is defined as the set of all quantum patterns
$$ \TrQSet = \left\{ (\rho_{\vec x_1}, \lambda_1), \ldots , (\rho_{\vec x_m}, \lambda_m) \right\} .$$

Given any class label $i\in L$, we can also define the set $\TrQSet^i$ as the set of all 
% objects represented by quantum states 
object quantum-states $\rho_{\vec x_j}$ that are associated to the set $\TrSet^i$ of all $i$-objects vectors:
\begin{equation}\label{isqtr}
\TrQSet^i=\{\rho_{\vec x_j}\, : \,\vec x_j\in \TrSet^i\}.
\end{equation} 

\begin{definition}\label{centroid}
Let $i\in L$ be a class label. The quantum centroid associated to $i$ denoted by
$\rho_{(i)}$
is the uniformly weighted convex combination of all $i$-object quantum-states

% $$ \rho_{(i)}=\frac{1}{m_i}\sum_{\vec x_j\in\TrSet^i}\rho_{\vec x_j}, $$
$$ \rho_{(i)}=\frac{1}{|\TrQSet^i|}\sum_{\vec x_j\in\TrSet^i}\rho_{\vec x_j}, $$
\end{definition}
where $|\TrQSet^i|$ is the cardinality of $\TrQSet^i$ (which is equivalent to $|\TrSet^i|$, the cardinality of $\TrSet^i$).

%-----------------------------------

Thus, the $\ell$ class labels are in one-to-one correspondence with the
set $\{\rho_{(1)},\ldots,\rho_{(\ell)}\}$ for all quantum centroids.

\begin{remark}
Let us consider the classical centroid $\vec{C}_{(i)}=\frac{1}{|\TrSet^i|}\sum_{\vec x_j\in\TrSet^i}\vec x_j$ and let $\rho_{\vec C_{(i)}}$ the quantum encoding of the classical centroid $\vec C_{(i)}$. In general we can see that $\rho_{\vec C_{(i)}}\neq \rho_{(i)}$. In other words, the quantum centroid defined in Def.~\ref{centroid} does not correspond to the quantum encoding of the classical centroid.
\end{remark}
\begin{remark}
Let us consider a $d$-dimensional vector $\vec\delta$. In the classical case, the translation of the object-vector $\vec x_j\mapsto\vec x_j+\vec{\delta}$ results in the translation of the classical centroid as $\vec C_{(i)}\mapsto\vec C_{(i)}+\vec{\delta}$. However, the quantum centroid $\rho_{(i)}$ related to the object quantum-states $\{\rho_{\vec x_j}:\vec x_j\in \TrSet^i\}$ turns out to be non-translational invariant. This characteristic is shown to be beneficial for classification tasks ~\cite{IJTP,SS}.
\end{remark}
%-----------------

Let $
\TrSet:=\{(\vec x_1 ,\lambda_1),\ldots\,(\vec x_m,\lambda_m)\}$
be a training dataset with class labels
$L:=\{1,\ldots,\ell\}$.

Let $\mathcal{B}(\mathcal H)^+$ be the set of bounded and positive semidefinite operators acting on $\mathcal{H}$. A {\em quantum classifier\/} is a  classifier $Cl_f$, where the function $f$ is determined by a measurement 
$\mathcal M:\,L\to\,\mathcal{B}(\mathcal H)^+$
% accordingly to 
(see \cite{watrous_2018}, Def. 2.34).  

\begin{comment}
(\textcolor{red}{If we keep $L$ in the definition, we should explain that $L$ is the power set of $\mathcal{O}$ in the finite case, and the Borel sets in the infinite dimensional case; otherwise, we could just put simply $\mathcal{O}$, siplifying the definition a little bit. It depends on how we want to define it.}) 
\end{comment}

Recall that a measurement is defined as a map $\mathcal M$ from a finite non-empty set  $\mathcal O$ representing a set of possible outcomes of a physical quantity into $\mathcal{B}(\mathcal H)^+$, such that $\sum_{i\in\mathcal{O}}\mathcal{O}(i)=\mathbb{I}$.
%It follows from the definition that for any $i\in\mathcal{O}$: the operator $\mathcal{M}(i)$ is an effect.
In the case of von Neumann measurements, every $\mathcal{M}(i)$ is a projection.

We can now define formally the notion of a quantum classifier.
\begin{definition}\label{qclassifier}
A quantum classifier is a classifier $Cl_f$ (see Eq.(\ref{classifier}))
for which the map $f:\,\C^d\,\to [0,1]^\ell$ satisfies the following condition. There exists a measurement 
$\mathcal M:\,L\to\,\mathcal{B}(\mathcal H)^+$
such that 
$$ \forall\vec x\in\C^d: \,f(\vec x)_i=\Tr(\mathcal M(i)\rho_{\vec x}) $$
where tr is the trace of a matrix.
\end{definition}
By definition of measurement, 
$\sum_{i=1}^{\ell}\Tr(\mathcal M(i)\rho_{\vec x})=1$. Consequently, a quantum classifier is always probabilistic.\\

% \subsection{Improving the accuracy via redundancy}
\subsection{Encoding in a larger Hilbert space}\leavevmode

%In some cases turns out to be useful to consider, after the encoding $\vec{x}\mapsto \rho_{\vec{x}}$, to make $n$-tensor products of the object quantum-state $\rho_{\vec{x}}$, i.e. $\rho_{\vec{x}}\mapsto \rho_{\vec{x}}\otimes\ldots\otimes\rho_{\vec{x}}$.
An interesting question is whether classification accuracy can be improved by increasing the dimension of the state space of density matrices that represents feature-vectors. Since mapping a raw dataset in a low-dimensional feature space to a higher-dimensional feature space is a standard technique in the kernel method, such question arises naturally. Although the computation in the larger feature space increases the run time, the accurate prediction has more value in certain machine learning applications, such as those specialized in medical diagnosis. In particular, we consider encoding object-vectors $\vec{x}$ as the tensor product of $n$ copies of object quantum-states $\rho_{\vec{x}}$, i.e. $\vec{x}\mapsto \rho_{\vec{x}}\otimes\stackrel{n}{\ldots} \otimes\rho_{\vec{x}}$. In this case, we generalize the set defined in Eq. (\ref{isqtr}) as
$$ \TrQSet^{i^{(n)}}=\{\rho_{\vec x_j}\otimes\stackrel{n}{\ldots} \otimes\rho_{\vec x_j} : \,\vec x_j\in \TrSet^i \}. $$
Similarly, the $n$ copies-centroid can be defined as

\begin{equation}\label{ncentroid}
\rho_{(i)}^{(n)}=\frac{1}{|\TrQSet^i|}\sum_{\vec x_j\in\TrSet^i}\rho_{\vec x_j}\otimes\stackrel{n}{\ldots} \otimes\rho_{\vec x_j}.
\end{equation}
From the definition above, we should keep in mind that, in general, $\rho_{(i)}^{(n)}\neq\rho_{(i)}\otimes\stackrel{n}{\ldots} \otimes\rho_{(i)}$. With the construction above, the $\otimes^n$-generalization of a quantum classifier introduced in Def.~\ref{qclassifier} can be defined by a function $f:\mathbb C^{d^n}\rightarrow [0,1]^\ell$ and by a measurement 
$\mathcal M:\,L\to\,\mathcal{B}(\mathcal \otimes^n H)^+$. As we will show, this procedure turns out to be advantageous in improving classification accuracy.

\begin{comment}
{\color{blue}
DANIEL / LEO  

(\emph{The following concepts maybe could be formulated in a more suitable way and, maybe, also moved above (in the previous section)).}

 The second step is basically the core of the procedure and, like the standard scenario, it consists first to use part of the available dataset to set the classifier (training) and then to apply the classifier to the rest of the dataset (test set), i.e. to attribute to each object of the test set the class that it belongs to (testing). 
Finally, the last step is nothing but the inverse of the encoding. }
\end{comment}

%-------------------------------------------------------
\section{Classification inspired by quantum state discrimination}
% {\color{blue} 
% GIUSEPPE  / FEDE / DANIEL / CARSTEN

% Insert here some sentences about our leading idea that the our quantum classifiers are based on  state discrimination, with particular
% reference to quantum centroids. In particular, we have to explain that 1) the main intuition is that a good discrimination betwwen centroids provide a \emph{good performance} in the classification and 2) make copies of the initial state provide additional improvement in the classification process.} 

Since 
% the centroid of the training data and 
feature-vectors
% subject to classification 
are defined as quantum states, we can conceptualize the machine learning task as finding an optimal quantum measurement for the feature-vectors, with respect to the 
% centroids. 
quantum centroids. This permits the development of the new classification procedure inspired by quantum state discrimination~\cite{Helstrom1969,Barnett:09,Bae_2015}, which is a well-established field in the theory of quantum information~\cite{watrous_2018}. To make the connection more apparent, we briefly review quantum state discrimination as follows.

\subsection{Quantum state discrimination}\leavevmode

\begin{comment}
Consider an ensemble of quantum systems represented by a finite Hilbert space
$\C^d$, whose quantum states and associated probabilities are given by the set of pairs
$$
R=\{(\rho_1,\Prob_1),\cdots,(\rho_m,\Prob_m)\}, 
$$

\noindent where, for any $i$ $(1\le i\le m)$, $\rho_i$ is a density operator of $\C^d$,
and $(\Prob_i,\ldots,\Prob_m)$ is a probability-vector (namely a 
sequence of positive real numbers such that $\sum_{i=1}^m\Prob_i=1$).
The components of the vector $(\Prob_1,\cdots,\Prob_m)$ are called a priori probabilities. 

Suppose that Alice wishes to communicate an information to Bob by using a 
quantum system in one of the states of $R$. First of all, Alice tells Bob what 
the possible states are, together with their corresponding a priori probabilities.  
% In this scenario,  $\Prob_i$  represents  the probability for Bob to randomly pick $\rho_i$ from the set of possible states. In other words, 
The a priori probabilities is related to the relative frequencies with which the states $ \rho_1,\ldots\,\rho_m$ 
are prepared. Alice selects a quantum system $\rho_i$ and hands it to Bob.
Bob has full knowledge of the relative frequencies in the ensemble $R$, but he does not 
know the actual state of the system he will receive from Alice in one run of the experiment. 
Bob's task is to determine the actual state of the system received by performing a single measurement over 
some suitably chosen physical observable.
\end{comment}

The task of discriminating quantum states is a fundamental problem in quantum information theory, with deep implications in quantum cryptography and quantum error correction~\cite{BENNETT20147}. The problem of quantum state discrimination can be summarized as follows.
Let us suppose that Alice wishes to send a message to Bob by using a quantum channel. To do this, Alice selects a state $\rho_i$ with an a priori probability $p_i$ from a given set of possible states in the Hilbert space $\mathbb C^d $. We indicate the set $R$ of these possible states with their respective a priori probabilities as follows 
$$
R=\{(\Prob_1,\rho_1),\cdots,(\Prob_{\ell},\rho_{\ell})\}, 
$$
where $(\Prob_i,\ldots,\Prob_{\ell})$ is a probability-vector (a 
sequence of positive real numbers such that $\sum_{i=1}^{\ell}\Prob_i=1$).

Bob knows a priori both the set of possible states and the associated probabilities. His task is to determine, by means of a suitably chosen measurement, the state $\rho_i$ he receives from Alice, and hence, the intended message. But the problem of finding an optimal strategy for discrimination among arbitrary states, has no known solution in the general case. While optimal solutions can be found for some particular cases (as, for example, when $R$ contains only mutually orthogonal states), in general, errors will necessarily occur in the discrimination process (see for example \cite{Bae_2015}).
This means that, in general, there exists no measurement $\mathcal M$ 
such that $\Tr(\mathcal M(i)\rho_j)=0$ when $i\neq j$, 
$\forall i,j$ $(1\leq i,j \leq {\ell})$. 
%%% Ideally, his strategy should be based on the assumption that whenever 
%the optimal measurement yields the $i$-th outcome, he was given the $i$-th state. But in f%act, for the general case, 
%it is easy see that there exists no such optimal measurement if the required discrimination %error probability is not allowed to be different from zero. 
%More formally, there exists no measurement $\mathcal M$ 
%such that $\Tr(\mathcal M(i)\rho_j)=0$ 
%$\forall i,j$ $(1\leq i,j \leq m)$ (with $i\neq j$). 
Thus, once Alice sends the $i$-th state to Bob, he can either conclude (erroneously) that he was given the state $\rho_j$ $(i\neq j)$ or, conversely, he can conclude (correctly) that he was given the state $\rho_j$ $(i=j)$ 
(successful discrimination). The average probability 
for Bob to perform a successful 
discrimination by means of a given measurement 
$\mathcal M$ is given by 
\begin{equation}\label{Measure}
\Prob _{succ}^{\mathcal M}(R)= \sum_{i=1}^{\ell}\Prob_i \Tr(\mathcal M(i)\rho_i).
\end{equation} 

In order to minimize the error probability in the discrimination problem, it is necessary to find an optimal measurement $\mathcal M$ which maximizes Eq.~(\ref{Measure}). Or equivalently, that 
minimizes the {\em discrimination error probability} 
$Err(R):=1-\Prob _{succ}^{\mathcal M}(R)$. One can prove (see \cite{watrous_2018}) that, for any ensemble $R$, there exists an {\em optimal measurement} $\mathcal M$ (shortened to $Opt(R)$), such that 
% $$ Opt(R)=\Prob _{succ}^{\mathcal M}(R)= \max_{\mathcal N}\{ \Prob _{succ}^{\mathcal N}(R) \,: \mathcal{N}\,\,\text{is a measurement}\}. $$
$$ Opt(R)= \max_{\mathcal N}\{ \Prob _{succ}^{\mathcal N}(R) \,: \mathcal{N}\,\,\text{is a measurement}\}. $$

\subsection{Helstrom measurement and binary classification}\leavevmode

In 1969, Helstrom reported an exact analytical description for the optimal measurement for ensembles of two quantum states
\cite{Helstrom1969}.
Let 
$$
R=\{(\Prob_1,\rho_1),(\Prob_2,\rho_2)\}
$$
be an ensemble of two quantum states 
with a priori probabilities $\Prob_1$ and $\Prob_2=1-\Prob_1$. 
Let us define the \emph{Helstrom observable} as 
\begin{equation}\label{HO}
    \Lambda=\Prob_1\rho_1-\Prob_2\rho_2.
\end{equation}
 Let $l_+$ and $l_-$ be the sets of all the eigenvectors determined by the positive and negative eigenvalues of $\Lambda$, respectively. 
Let $P_+=\sum_{\lambda\in l_+}P_{\lambda}$ and 
$P_-=\sum_{\lambda\in l_-}P_{\lambda}$, 
where $P_{\lambda}$ indicates the projection associated to the eigenspace determined by the eigenvalue $\lambda$. 
Intuitively, $P_+$ and $P_-$ represent the property of the measurement to correctly identify a state as being in state $\rho_1$ or $\rho_2$, respectively.
The set $\{P_+,P_-\}$ determines a von Neumann measurement, given that $P_++P_-=\mathbb I$
(see Appendix~\ref{sec:appendixB}). Helstrom proved that this measurement is optimal and the probability to successfully discriminate correctly between the two states has an upper bound \cite{Helstrom1969} -- called \emph{Helstrom bound} $(\tt H_b)$
-- given by
\begin{equation}\label{HB}
    {\tt H_b}(R)=\Prob_1 \Tr(P_+\rho_1)+\Prob_2 \Tr(P_-\rho_2).
\end{equation}

It turns out (see \cite{watrous_2018}) that 
$$
{\tt H_b}(R)=\frac{1}{2}+\frac{1}{2}{\tt T}\left(\Prob_1\rho_1,\Prob_2\rho_2\right)
$$
where ${\tt T}$ is the trace distance induced by the trace norm $\left\Vert \,^.\,\right\Vert_1$,
$$
{\tt T}\left(\Prob_1\rho_1,\Prob_2\rho_2\right):=\frac{1}{2}\left\Vert \Prob_1\rho_1-\Prob_2\rho_2\right\Vert_1
$$
In the case when $\Prob_1=\Prob_2=\frac {1}{2}$, we will simply write
$\tt H_b(\rho_1,\rho_2)$ instead of ${\tt H_b}(R)$.
% The Helstrom bound can be seen as the probability of successfully distinguishing between the two states $\rho_1$ and $\rho_2$. 
$\tt H_b(\rho_1,\rho_2)$ satisfies the following properties (see \cite{watrous_2018}): 
\begin{enumerate}
\item[i)]\space  $0\le \tt H_b(\rho_1,\rho_2)\le 1$;
\item [ii)]$\tt H_b(\rho_1,\rho_2)=\frac{1}{2}$ iff $\rho_1=\rho_2$;
\item [iii)] $\tt H_b(\rho_1,\rho_2)=1$ iff $im(\rho_1)$ is orthogonal
to $im(\rho_2)$, where $im(\rho)$ is the subspace spanned by the image of $\rho$.
\end{enumerate}

Under Helstrom's formalism, let us now consider the particular case for binary classification, where $L:=\{1,2\}$.
After quantum encoding, we obtain the quantum training datasets $S^1_{Qtr}$ and $S^2_{Qtr}$, defined in accordance with Eq.~(\ref{isqtr}), and $\rho_1$ and $\rho_2$ as the respective quantum centroids obtained from Def.~\ref{centroid}.
In this way, it is possible to define the \emph{Helstrom observable} (as in Eq.~(\ref{HO})) for the two quantum centroids $\rho_1$ and $\rho_2$, where $p_1$ and $p_2$ are the a priori probabilities given by $p_1=\frac{|S^1_{Qtr}|}{|S_{Qtr}|}$ and $p_2=\frac{|S^2_{Qtr}|}{|S_{Qtr}|}$, respectively.
In this case, the quantum classifier introduced in Def.~\ref{qclassifier} for an arbitrary new ``unseen'' object quantum-state $\rho_{\vec x}$ (or from the quantum training dataset) is given by

$$Cl_f(\vec x):=\begin{cases}
    1, \,\, \text{if}\,\,\,\, \Tr(P_+ \rho_{\vec x})\geq \Tr(P_- \rho_{\vec x});\\
    2, \,\,\,\, \text{otherwise}.
    \end{cases}$$

An experiment is shown in \cite{Plos}, where the Helstrom Quantum Centroid (HQC) classifier is applied to fourteen different datasets and compared with eleven different standard classifiers, with the HQC classifier exhibiting a high level of accuracy performance when compared to these standard classifiers. In that paper, the application of the quantum state discrimination procedure to the problem of classification tasks
% was inspired by the following guiding idea: 
gave rise to the following notable observation: the higher the probability of successfully discriminating between quantum centroids, the better the accuracy performance of the quantum classifier. In other words, for the binary classification case, if one manages to increase the Helstrom bound (considered as a  measure of distinguishability between the two quantum centroids), then one would expect the accuracy of the quantum classifier to be higher. The experiment given in \cite{Plos} provides an empirical evidence of this intuition. Also as shown in Fig.~\ref{Isto}, for the given fourteen datasets that were investigated, we obtained a Pearson coefficient of $\approx 0.9468$ which indicates a strong positive (linear) relationship between the Helstrom bound and balanced accuracy score.
% and a linear model with with an $R^2 \approx 0.896465$ which indicates a good fit. 
% The model that has been fit is (maybe coincidentally) close to ${\tt H_b}(a) = \frac{2}{3} a + \frac{1}{6}$, where we used $a$ as the balanced accuracy variable.
\begin{figure}[t]
\includegraphics[width=0.90\columnwidth]{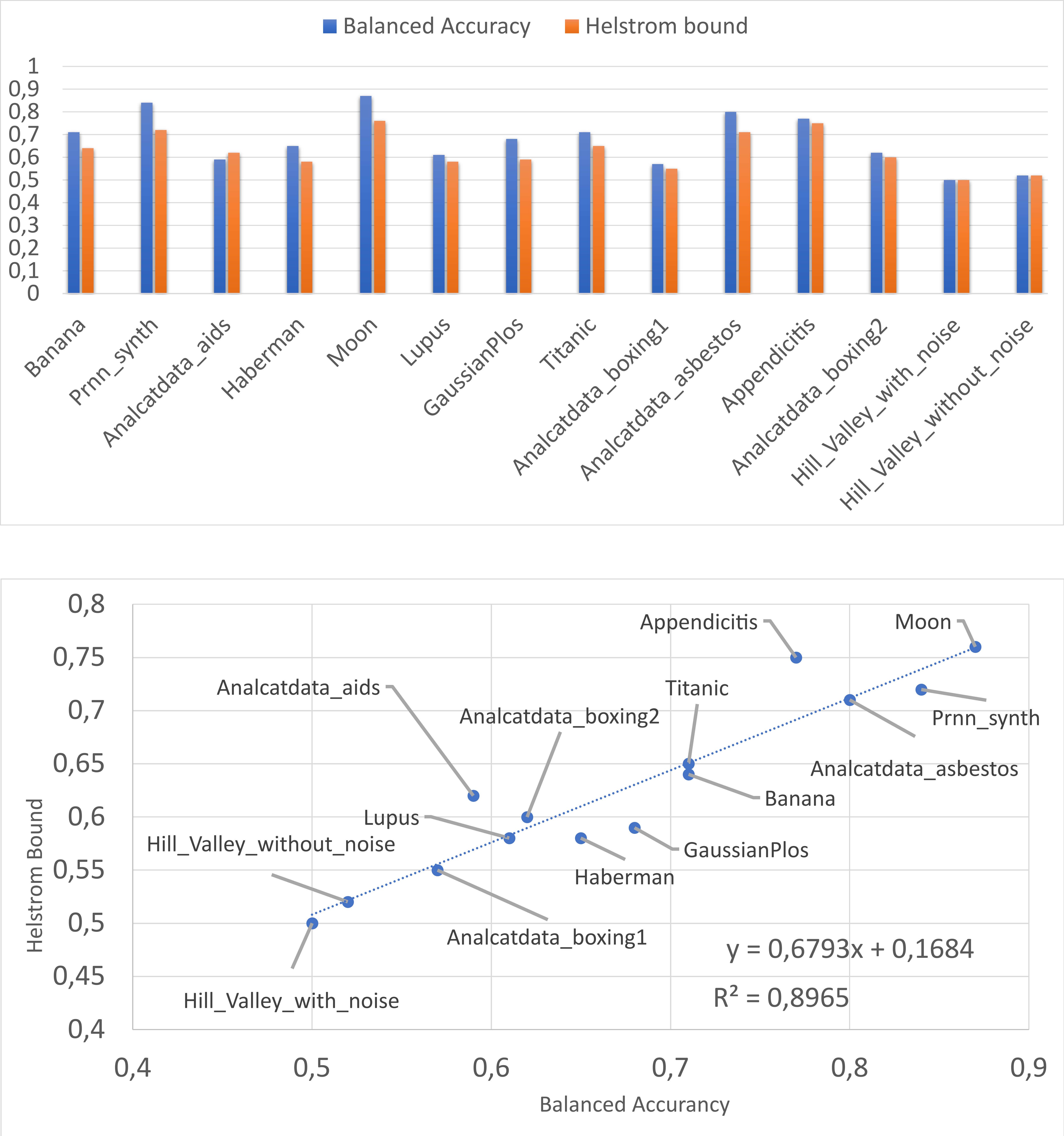}\caption{\label{Isto}\textbf{\small Balanced Accuracy vs. Helstrom Bound.} (Upper Fig.) \small Balanced accuracy and the Helstrom bound were calculated and compared for fourteen example datasets. (Lower Fig.) A Pearson coefficient of $0.9468$ was obtained indicating the Helstrom bound and balanced accuracy has a strong positive (linear) relationship. The regression line obtained is ${\tt H_b} \approx \frac{2}{3}$(balanced accuracy)$+\frac{1}{6}$ with $R^2 \approx 0.8965$, indicating a good fit of the linear regression model with respect to the data.}

\end{figure}

\subsubsection{Advantages of encoding in a larger Hilbert space}\leavevmode

As discussed at the end of Section \ref{Quantum Classifier}, it is possible to perform
% , after the encoding, 
the tensor product of $n$ copies of each object quantum-state. This allows us to define the new quantum centroids $\rho_1^{(n)}$ and $\rho_2^{(n)}$. Consequently, the respective Helstrom measurement is given by $\{P_+^{(n)},P_-^{(n)}\}$. In this way, the quantum classifier takes the form

$$Cl^{(n)}_f(\vec x):=\begin{cases}
    1, \,\, \text{if}\,\,\,\, \Tr(P_+^{(n)} \rho^{(n)}_{\vec x})\geq \Tr(P_-^{(n)} \rho^{(n)}_{\vec x});\\
    2, \,\,\,\, \text{otherwise}.
    \end{cases}$$

The empirical results obtained in \cite{Plos} strongly suggest that 
%the additional information obtained%
taking tensor products of artificial copies of the object quantum-states turns out to be beneficial for classification tasks. As a theoretical support to this claim, in this paper, we prove the following.

\begin{theorem}\label{th:morecopies}
For any $n\in\N^+$,
$$
{\tt H_b} (\rho_{(1)}^{(n)},\,\rho_{(2)}^{(n)})\leq {\tt H_b}(\rho_{(1)}^{(n+1)},\,
   \rho_{(2)}^{(n+1)}).
$$
\end{theorem}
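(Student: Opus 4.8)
The plan is to reduce the statement about the Helstrom bound to a monotonicity statement about the trace norm, and then to exhibit a single quantum channel --- the partial trace over one tensor factor --- that links the $(n+1)$-copy centroids to the $n$-copy centroids. Recall the relation ${\tt H_b}(\rho_1,\rho_2)=\tfrac12+\tfrac12{\tt T}(\tfrac12\rho_1,\tfrac12\rho_2)$ established above, in which $\tt T$ is proportional to $\left\Vert\rho_1-\rho_2\right\Vert_1$; hence ${\tt H_b}(\rho_1,\rho_2)$ is a monotonically increasing (affine) function of the trace norm $\left\Vert\rho_1-\rho_2\right\Vert_1$. Consequently the theorem is equivalent to the inequality
$$
\left\Vert\rho_{(1)}^{(n)}-\rho_{(2)}^{(n)}\right\Vert_1 \;\le\; \left\Vert\rho_{(1)}^{(n+1)}-\rho_{(2)}^{(n+1)}\right\Vert_1 ,
$$
and it suffices to prove this. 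Write $\Delta^{(n)}:=\rho_{(1)}^{(n)}-\rho_{(2)}^{(n)}$, which is a traceless Hermitian operator obtained as the difference of two density operators.

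The key observation is that tracing out the last tensor factor sends the $(n+1)$-copy centroid to the $n$-copy centroid. Indeed, by linearity of the partial trace $\Tr_{n+1}$ over the last factor together with the definition in Eq.~(\ref{ncentroid}),
$$
\Tr_{n+1}\!\left(\rho_{(i)}^{(n+1)}\right)
=\frac{1}{|\TrQSet^i|}\sum_{\vec x_j\in\TrSet^i}\Tr_{n+1}\!\left(\rho_{\vec x_j}^{\otimes(n+1)}\right)
=\frac{1}{|\TrQSet^i|}\sum_{\vec x_j\in\TrSet^i}\rho_{\vec x_j}^{\otimes n}\,\Tr(\rho_{\vec x_j})
=\rho_{(i)}^{(n)} ,
$$
where the last equality uses $\Tr(\rho_{\vec x_j})=1$. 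This is precisely the place where the non-factorizing definition (\ref{ncentroid}) matters: the identity holds term by term inside the convex sum, so we never need $\rho_{(i)}^{(n+1)}$ to factor as $\rho_{(i)}\otimes\rho_{(i)}^{(n)}$ (which it generally does not). Applying $\Tr_{n+1}$ to the difference then gives $\Tr_{n+1}\big(\Delta^{(n+1)}\big)=\Delta^{(n)}$ by linearity.

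To finish, I would invoke contractivity of the trace norm under completely positive trace-preserving maps (the data-processing inequality for trace distance, see \cite{watrous_2018}). The partial trace is such a map, so
$$
\left\Vert\Delta^{(n)}\right\Vert_1=\left\Vert\Tr_{n+1}\big(\Delta^{(n+1)}\big)\right\Vert_1\le \left\Vert\Delta^{(n+1)}\right\Vert_1,
$$
which is exactly the required inequality; the monotonicity of ${\tt H_b}$ in $\left\Vert\cdot\right\Vert_1$ then yields the theorem for every $n\in\N^+$.

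I do not expect a deep obstacle, but the single point requiring care is that $\Delta^{(n+1)}$ is not a state but a traceless Hermitian operator, so I must apply the form of trace-norm contractivity valid for arbitrary Hermitian (or trace-zero) operators rather than only for density matrices. This is standard and follows either from the general contractivity of trace-preserving positive maps or, concretely, by splitting $\Delta^{(n+1)}$ into its positive and negative parts. A secondary routine check is that each $\rho_{(i)}^{(n)}$ is genuinely a density operator --- it is, being a convex combination of the product states $\rho_{\vec x_j}^{\otimes n}$ --- so that the equal-prior Helstrom formula and properties (i)--(iii) apply at every level $n$.
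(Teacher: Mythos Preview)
Your proof is correct and follows essentially the same route as the paper's own argument in Appendix~\ref{sec:appendixB}: reduce the Helstrom-bound inequality to a trace-distance inequality, observe by linearity that tracing out one tensor factor sends $\rho_{(i)}^{(n+1)}$ to $\rho_{(i)}^{(n)}$, and invoke contractivity of the trace distance under the partial trace. The only cosmetic differences are that the paper traces out the first factor rather than the last, and that you phrase contractivity in terms of $\left\Vert\Delta^{(n+1)}\right\Vert_1$ (being slightly more explicit about the Hermitian-operator version) whereas the paper works directly with ${\tt T}(\cdot,\cdot)$ on the pair of states.
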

\begin{proof}
See Appendix II.
\end{proof}

According to Theorem \ref{th:morecopies}, increasing the number of copies of the object quantum-states (and consequently, increasing the number of factors in the tensor product) gives a higher Helstrom bound (which means a higher success probability for discriminating the two quantum centroids). 
% This mathematical result seems to be strictly connected with the statistical scores obtained experimentally. 
As a result, the accuracy performance of the quantum classifier increases. As an example, by applying the HQC classifier to six different datasets (in the experiment as detailed in \cite{Plos}), we show in Fig.~\ref{Four} how the application of this procedure improves the accuracy performance measured by three different statistical scores (balanced accuracy, f1-score and k-Cohen). 

\begin{figure}[t]
\centering
\includegraphics[width=0.99\columnwidth]{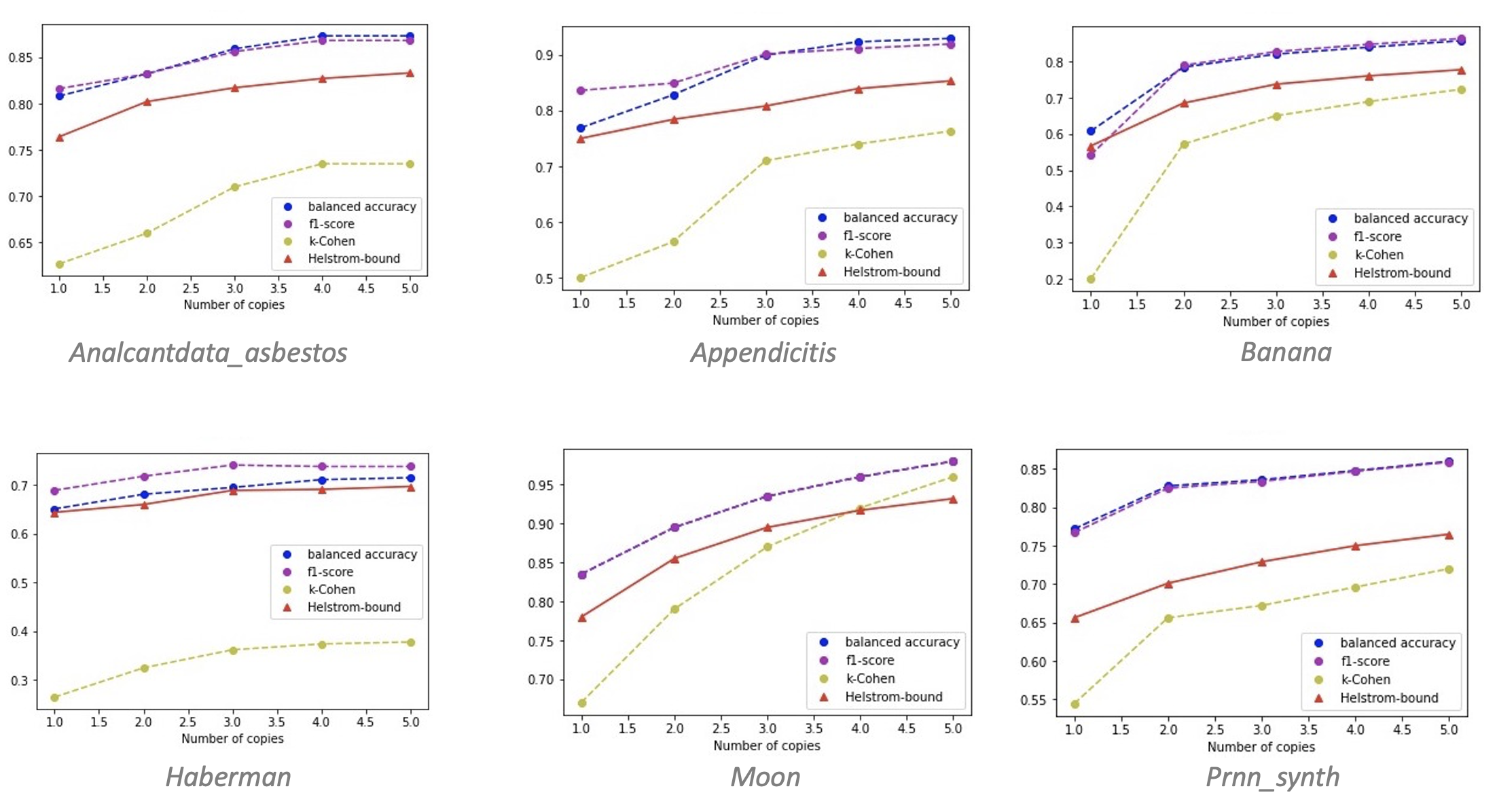}\caption{\label{Four} These plots show the improvement of the balanced accuracy, $f1$-score, k-Cohen and Helstrom bound, respectively, with the increasing of the number of the tensor copies of the original data.}
\end{figure}

Many questions remain open, and an extensive investigation into the relation between the Helstrom bound and the statistical scores in machine learning problems is still ongoing. The results obtained so far motivates further research into this.

It is also relevant to note that, even if increasing the number of copies in the tensor product, in principle, produces an improvement to the accuracy performance of the quantum classifier, at the same time it also produces an additional computational cost (i.e. run time of the algorithm). Hence, depending on the size of the dataset and on the machine being used, the maximum number of tensor copies should be chosen by a user based on practical considerations. This leads to another interesting future work that aims to utilize quantum hardware to speedup the classification protocol.

\subsection{Pretty Good measurement and multiclass classification}\leavevmode

A common strategy to perform multiclass classification is based on combinatorial compositions of binary classifications (i.e. One-vs-One or One-vs-Rest strategies).
% , with a natural slowdown of the process. 
One disadvantage with these strategies is that if a dataset contains a large number of classes, it would increase the computational time of the algorithm. In the following, we show how quantum state classification suggests an alternative method which allows us to avoid this combinatorial complexity for multiclass classification.

% As previously mentioned, 
Given an ensemble of possible states with their respective a priori probabilities
\begin{equation}\label{ensamble}
R=\{(\Prob_1,\rho_1),\cdots,(\Prob_{\ell},\rho_{\ell})\},
\end{equation}
it may be difficult to find an analytical description for the exact
optimal measurement associated to $R$. One possible solution is to search for a sub-optimal measurement that can be expressed in an analytical form. This is known as the so-called {\em Pretty Good measurement}
\cite{watrous_2018}, which we will introduce in the following. 

Let us define the average state of $R$ as
$$
\sigma=\sum_{i=1}^{\ell}\Prob_i\rho_i
$$
For any $i:\,1\le i\le {\ell}$, let us define the following operator
$$
E_i=(\sigma\urcorner)^{1/2}\Prob_i\rho_i(\sigma\urcorner)^{1/2},
$$
where $\sigma\urcorner$ is the pseudoinverse (or Moore-Penrose inverse) of $\sigma$.
The operator $E_i$ is well defined. Indeed, by Theorem \ref{th:pseudo1} (iv) (see Appendix~\ref{sec:appendixC}),
$\sigma\urcorner$ is positive since $\sigma$ is positive and 
consequently, the square root of $\sigma\urcorner$ exists and is unique.
Since $\Prob_i\rho_i$ is positive semidefinite and it is enclosed on the left and on the right by a self-adjoint operator $(\sigma\urcorner)^{1/2}$, we conclude $E_i$ is also positive semidefinite (see \cite{Ho}, p.431). By Theorem  \ref{th:pseudo1} (vi) (see Appendix~\ref{sec:appendixC}), we have 
$\sigma\sigma\urcorner=\sigma\urcorner\sigma$. Consequently, by Theorem 6.6.4 (see \cite{Fri}, p.223), 
$(\sigma\urcorner)^{1/2}$ commutes with $\sigma$. 
As a result, we have
\begin{align}
\sum_{i=1}^{\ell} E_i&=\sum_{i=1}^{\ell} (\sigma\urcorner)^{1/2}\Prob_i\rho_i(\sigma\urcorner)^{1/2} \notag \\
                &=(\sigma\urcorner)^{1/2}\sigma (\sigma\urcorner)^{1/2} \notag \\
                &=\sigma(\sigma\urcorner)^{1/2}(\sigma\urcorner)^{1/2} \notag \\
                &=\sigma(\sigma\urcorner) \notag \\
                &=P_{im(\sigma)}, \label{E-im}
\end{align}
by Theorem \ref{th:pseudo1} (vi), where $P_{im(\sigma)}$ is the projection onto the subspace spanned by the image of $\sigma$.
%----
%\begin{theorem}\label{th:pgm1}[]
%%\begin{enumerate}
%\item \space     For any $i: 1\le i\le m:\,E_i$ is a positive semi-definite operator;
%\item \space $\sum_{i=1}^mE_i=\sigma \sigma\urcorner$.
%\end{enumerate}
%\end{theorem}
%\begin{proof}
%See Appendix
%\end{proof}

From Theorem \ref{th:pseudo1} (vi), $\sigma \sigma\urcorner=P_{im(\sigma)}$ is a projection.
Since any $E_i$ is positve semidefinite, we have that $E_i\le \sum_{i=1}^{\ell} E_i=P_{im(\sigma)}$ is 
bounded by the identity operator $\mathbb{I}$
and therefore it is an effect operator: $\mathbb{O}\le E_i\le \mathbb{I}$.
However, the set $\{E_i\}_{i=1}^{\ell}$ does not determine a measurement since in general 
$\sum_{i=1}^{\ell} E_i<\mathbb{I}$. 
The set $\{E_i\}_{i=1}^{\ell}$ can easily be transformed into a set of ${\ell}$ effects that 
finally induce a measurement.

For any $i$ with $1\le i\le {\ell}$, let us define the following operators
\begin{equation}\label{F}
F_i=E_i +\frac{1}{{\ell}}P_{ker(\sigma)},
\end{equation}
where $P_{ker(\sigma)}$ is the projection associated to the subspace spanned by
the kernel of $\sigma$.
It turns out that the map
$$
\mathcal{F}:\,\{1,2,\cdots,{\ell}\}\,\to \mathcal{B}(\C^n)^+
$$ 
is a measurement since, as we have seen above, $\sum_{i=1}^{\ell} E_i=P_{im(\sigma)}$ and
consequently
\begin{align}
\sum_{i=1}^{\ell} F_i&=\sum_{i=1}^{\ell}(E_i+\frac{1}{{\ell}}P_{ker(\sigma)}) \notag \\
&=\sum_{i=1}^{\ell} E_i+\frac{1}{{\ell}}\sum_{i=1}^{\ell} P_{ker(\sigma)} \notag \\
&=P_{im(\sigma)}+P_{ker(\sigma)} \label{F-im-ker} \\
&=\mathbb{I}. \notag
\end{align}
We have now induced a measurement where $\sum_{i=1}^{\ell} F_i=\mathbb{I}$.
% Since $\sigma$ is invertible iff $ P_{ker(\sigma)}=\mathbb{O}$, we can conclude, by Theorem \ref{th:pseudo1} (i) (see Appendix III) that $E_i=F_i$ iff $\sigma$ is invertible.
Since $\sigma$ is invertible iff $ P_{ker(\sigma)}=\mathbb{O}$, we can conclude, from Eqs. (\ref{E-im}) and (\ref{F-im-ker}), that $E_i=F_i$ iff $\sigma$ is invertible.

The map $\mathcal F$, called Pretty Good measurement \cite{Wooters_94}, 
is sub-optimal ~\cite{Barnum_Knill_2002} since
$$
\Prob _{succ}^{\mathcal F}(R)\,\ge \,Opt(R)^2.
$$

Let us now turn to the general problem of multiclass classification. We will follow a similar formalism to that of the Helstrom observable (for the binary classification case), but using a classifier based on the Pretty Good measurement formalism. After the quantum encoding procedure, we consider the quantum training datasets $S^i_{Qtr}$ as defined in Eq.~(\ref{isqtr}) and the respective quantum centroids $\rho_{(i)}$ as defined in Def.~\ref{centroid}. Hence, it is possible to consider the ensamble $R$ as defined in Eq.~(\ref{ensamble}) where $p_i=\frac{|S^i_{Qtr}|}{|S_{Qtr}|}$ (for any $i\in\{1,\ldots,{\ell}\}$). In this way, $R$ induces a set of operators $F_i$ as defined in Eq.~(\ref{F}). In this case, the multiclass quantum classifier introduced in Def.~\ref{qclassifier} for an arbitrary new ``unseen'' object quantum-state $\rho_{\vec x}$ (or from the quantum training dataset) is given by
% \begin{equation}\label{PgmCl}
% Cl_f(\vec x):=\min \{a\in\{1,\cdots,m\}:p_a Tr ( F_a\rho_{\vec{x}})\geq p_j Tr ( F_j\rho_{\vec{x}}),  \forall j\in\{1,\ldots,m\}\}.
%\end{equation}
\begin{equation}\label{PgmCl}
Cl_f(\vec x):=\min_i\{i\in\{1,\cdots,{\ell}\} : p_i \Tr(F_i\rho_{\vec{x}})=\max_k\{p_k \Tr(F_k\rho_{\vec{x}}), 1\le k\le {\ell}\}\}.
\end{equation}
Notice that if $Tr (P_{Ker(\sigma)})=0$, we can replace $F_i$ by $E_i$ in the above equation.

We can also generalize this framework by taking the tensor product of $n$ copies of states. Thus, the definition of the multiclass quantum classifier introduced in Eq.~(\ref{PgmCl}), can naturally be extended as 
% $$ Cl_f(\vec x):=\min \{a\in\{1,\cdots,m\}: $$
% $$ Tr (p_a F^{(n)}_a\rho^{(n)}_{\vec{x}})\geq Tr (p_j F^{(n)}_j\rho^{(n)}_{\vec{x}}),  \forall j\in\{1,\ldots,m\}\}. $$
\begin{align*}
Cl_f(\vec x) :=
& \min_i\{i\in\{1,\cdots,m\} : \\ 
& \,p_i \Tr(F^{(n)}_i\rho^{(n)}_{\vec{x}})=\max_k\{p_k \Tr(F^{(n)}_k\rho^{(n)}_{\vec{x}}), 1\le k\le {\ell}\}\}. \notag
\end{align*}
    
As in the Helstrom measurement case, it is possible to define a bound for the Pretty Good measurement as follows
% $$\tt{PGM}_b=\sum_{i=1}^m p_i Tr (E_i\rho_i).$$
$${\tt PGM_b}(R)=\sum_{i=1}^{\ell} p_i \Tr (F_i\rho_i).$$
    
It is currently not known whether it is possible to obtain a general result regarding the relation between the value of the PGM bound $(\tt PGM_b)$ and the number of tensor products (as in Theorem \ref{th:morecopies}) for the Pretty Good measurement classifier. We present an initial insight and empirical evidence on this, suggesting an analogous result, in Appendix~\ref{sec:appendixD}.

%\section{$n$-ary clasification: $OvR$, $OvO$ and $PGM$ classifiers. A comparison. Here we can put all regardin hypertuning (rescaling and so on)}
%{\color{blue}LEO / DANIEL

%Introduce the difference between $OvR$, $OvO$ and $PGM$}

\section{Conclusion}
In this work we have established a connection between the problems of quantum state discrimination and classification tasks in the machine learning context. Specifically, we assert that the former can provide relevant benefits to the latter. The proposed strategy is based on the following steps. First we obtain a quantum feature map (or encoding) of real feature-vectors to pure quantum states, then to each class we associate a quantum centroid as a mixed quantum state. In other words, the feature space is defined by the space of linear operators on a Hilbert space. Next we develop a quantum version of the supervised classification task that has the general property that the performance of the classification process increases with the capability of distinguishing among the quantum centroids. We have presented theoretical and empirical evidences to support this. Finally, we show that in the quantum framework, making duplicate copies of the initial data turns out to be beneficial for the classification process. This idea is similar to increasing the dimension of the feature space that is encountered in the kernel trick. This strategy, previously provided only for binary classification, is hereby introduced in the general setting for multiclass classification developed using the Pretty Good measurement formalism.

\newpage
\begin{appendix}
%-------------------------------------------------------------------------------------------------------------------
\section{Quantum feature maps (encodings)}
\label{sec:appendixA}
There are, or course, infinitely many ways to map an object-vector into a density matrix. Different encodings have been already considered in previous works \cite{IJQI}. Here, as an instance, we focus on the following encoding procedure. For each $\vec x = \left[x^1, \ldots, x^d \right]\in \Real^d$, let 
% $$\tilde{x}=[\frac{x^1}{\sqrt{\sum_i(x^i)^2+1}}, \ldots, \frac{x^d}{\sqrt{\sum_i(x^i)^2+1}},\frac{1}{\sqrt{\sum_i(x^i)^2+1}}]$$
$$\tilde{x}=[\frac{x^1}{\sqrt{\sum_{D=1}^d(x^D)^2+1}}, \ldots, \frac{x^d}{\sqrt{\sum_{D=1}^d(x^D)^2+1}},\frac{1}{\sqrt{\sum_{D=1}^d(x^D)^2+1}}]$$

\noindent Next, we give the following.

\begin{definition}
The amplitude encoding is the map $$\Real^d\ni\vec x\mapsto\rho_{\vec x}=\tilde{x}^\dagger\tilde{x}$$  
\end{definition}

Note that $\rho_{\vec x}$ is a pure state in $\mathcal D^{(d+1)}$. We say that $\rho_{\vec x}$ is the object quantum-state associated to object-vector $\vec x$.

\section{Helstrom bound and tensor copies}
\label{sec:appendixB}
In this Appendix we give the proof of Theorem \ref{th:morecopies}: \\

\noindent For any $k\in\N^+$,
\begin{equation}
{\tt H_b} (\rho_{(1)}^{(k)},\,\rho_{(2)}^{(k)})\leq {\tt H_b}(\rho_{(1)}^{(k+1)},\,
   \rho_{(2)}^{(k+1)}). \notag
\end{equation}
\begin{proof}
Since the Helstrom bound of two density operators $\rho_1$ and $\rho_2$
is given by $\frac{1}{2}+\frac{1}{2}{\tt T}(\Prob_1\rho_1,\Prob_2\rho_2)$ -- where $\tt T$ is the trace distance induced by the trace norm $\left\Vert \,^.\,\right\Vert_1$ --
in order to prove Theorem \ref{th:morecopies}, it suffices to show that
$$
{\tt T} (\rho_{(1)}^{(k)},\,\rho_{(2)}^{(k)})\leq {\tt T}(\rho_{(1)}^{(k+1)},\,
\rho_{(2)}^{(k+1)}) .
$$
Let us consider the Hilbert space $\mathcal H=\mathcal H_1\otimes\mathcal H_2$, where
$\mathcal H_1=\C^d$ and $\mathcal H_2=\otimes^k\C^d$. Let $\tt pT_1$ be the partial trace of the first
component of $\mathcal H$, i.e., $\mathcal H_1$. 

\begin{comment}

It is well know \cite{NC00} that
the partial trace function is a trace-preserving quantum operation. \textcolor{red}{I don't think that the preservation of the trace is behind the derivation of the following four equalities. It is just linearity of the partial trace. Intuitively, one has a $k+1$ product state; if one traces out the first system, the result is just the tensor product of the reminding $k$ systems.} 
\end{comment}

Using linearity of the partial trace operator
\begin{align}
{\tt pT_1}(\rho_{(1)}^{(k+1)}) &= 
                      {\tt pT_1}\left(\frac{1}{|\TrQSet^1|}\sum_{\vec x_j\in\TrSet^1}\otimes^{(k+1)}\rho_{\vec x_j}\right) \notag \\ 
     & =\frac{1}{|\TrQSet^1|}\sum_{\vec x_j\in\TrSet^1}{\tt pT_1}\left(\otimes^{(k+1)}\rho_{\vec x_j}\right) \notag \\ 
        &=\frac{1}{|\TrQSet^1|}\sum_{\vec x_j\in\TrSet^1}\otimes^{(k)}\rho_{\vec x_j} \notag \\ 
&=\rho_{(1)}^{(k)} \label{pTr1-1}
\end{align}

The above list of equations can be physically interpreted as follows: if one has a product state of $k+1$ components, tracing out the first system will not alter the rest of the $k$ states. Similarly, 
\begin{align}
{\tt pT_1}(\rho_{(2)}^{(k+1)})=
                       \frac{1}{|\TrQSet^2|}\sum_{\vec x_j\in\TrSet^2}\otimes^{(k)}\rho_{\vec x_j}
=\rho_{(2)}^{(k)}. \label{pTr1-2}
\end{align}
The trace distance satisfies a contractivity property under the action of 
% trace preserving completely 
complete trace preserving of positive maps (i.e., trace preserving quantum operators, see \cite{Rus}). Since $\tt pT_1$ is a trace-preserving quantum operation, we have that
\begin{align}
{\tt T} \left({\tt {pT}_1}\left(\rho_{(1)}^{(k+1)}\right),\,{\tt pT_1}\left(\rho_{(2)}^{(k+1)}\right)\right)
\leq {\tt T}\left(\rho_{(1)}^{(k+1)},\,\rho_{(2)}^{(k+1)}\right). \label{pTr1-1-2}
\end{align}
Thus, by 
% (6.3) and (6.4) 
Eqs. (\ref{pTr1-1}), (\ref{pTr1-2}) and (\ref{pTr1-1-2}) we can conclude that 
$${\tt T} (\rho_{(1)}^{(k)},\,\rho_{(2)}^{(k)})\leq {\tt T}(\rho_{(1)}^{(k+1)},\,
\rho_{(2)}^{(k+1)}) .
$$
\end{proof}

%--------------------------------------------------------------------
\section{Mathematical properties of the pseudoinverse}
\label{sec:appendixC}
Let $\mathcal{\C}^n$ be the finite complex Hilbert space of dimension $n$. 
$\mathcal{B}(\mathcal H)^+$ will denote the set of all positive semidefinite
bounded linear operator of 
% $\mathcal C$
$\mathcal{\C}^n$. The set of 
% all {\em projections} of
all projections of $\mathcal H$ will be denoted by 
$P(\mathcal H)$. Using $E(\mathcal H)$ we will denote the set of all
effects of $\mathcal H$, i.e. the set of all positive operator in 
$\mathcal{B}(\mathcal H)^+$ that are bounded by the identity operator
$\mathbb I$. Thus, an effect is a an operator $E$ of $\mathcal H$ such that
$\mathbb O\le E\le \mathbb I$.
%---------------
%\subsection*{Measurements}
%Let $\mathcal O$ be a finite non-empty set representing a set of possible
%{\em outcomes} of a physical quantity. 
%\begin{definition}[Measurement]\label{def:measurement}
%A measurement is a map 
%$$
%\mathcal M:\,\mathcal{O}\,\to \mathcal{B}(\mathcal H)^+
%$$
%such that 
%$$
%\sum_{i\in\mathcal{O}}\mathcal{M}(i)=\mathbb I.
%$
%\end{definition}
%It follows from the definition that for any $i\in\mathcal{O}$:
%the operator $\mathcal{M}(i)$ is an effect.
%A {\em von Neumann measurement} is a measurement $\mathcal{M}$
%such that every $\mathcal{M}(i)$ is a projection.
%--------
% \subsection{Pseudoinverse}
% Let $A$ be a linear operator of $\mathcal{C}^n$. 
\begin{definition}
Let $A$ be a linear operator of $\mathcal{\C}^n$.
The pseudoinverse (or Moore-Penrose inverse) of $A$ is an operator $X$ of $\mathcal{\C}^n$ such that the following conditions are satisfied:
\begin{enumerate}
\item[i)] \,\,$AXA=A$;
\item[ii)] \,\,$XAX=X$;
\item[iii)] \,\,$(AX)^\dagger=AX$, where $^\dagger$ is the adjoint operation;
\item[iv)] \,\,$(XA)^\dagger=XA$.
\end{enumerate}
\end{definition}
%------------
One can prove that the pseudoinverse of any operator exists and is unique.
The pseudoinverse of an operator $A$  will be denoted by $A\urcorner$.
It turns out that if $A$ is invertible, then the inverse of $A$ (i.e. $A^{-1}$)
coincides with $A\urcorner$.
%--------
\begin{theorem}{\label{th:pseudo1}}
Let $A$ be a linear operator of $\mathcal{\C}^n$.
The following properties hold:
\begin{enumerate}
\item[i)] \,\,$A$ is invertible iff $A^{-1}=A^\urcorner$;
\item[ii)] \,\,$(A^\urcorner)^\urcorner=A$;
\item[iii)] \,\,$(A^\dagger)^\urcorner=(A^\urcorner)^\dagger$;
 \item[iv)] \,\,if $A\in \mathcal{B}(\mathcal H)^+$, then 
               $A^\urcorner\in \mathcal{B}(\mathcal H)^+$;
\item[v)] \,\,$AA^\urcorner$ and $A^\urcorner A$ are projections;
\item[vi)] \,\,if $A\in \mathcal{B}(\mathcal H)^+$, then 
          $AA^\urcorner=A^\urcorner A=P_{im(A)}$, where 
         $P_{im(A)}$ is the projection that projects onto the image of $A$.
\end{enumerate}
\end{theorem}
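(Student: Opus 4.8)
The plan is to separate the six properties into two groups: the purely algebraic statements (i)--(iii) and (v), which I would derive formally from the four defining Penrose conditions together with the assumed uniqueness of the pseudoinverse, and the two positivity statements (iv) and (vi), which genuinely require the spectral structure of a positive semidefinite operator. Throughout the algebraic part the recurring device is the same: exhibit a candidate operator, verify that it satisfies all four defining conditions, and then invoke uniqueness to conclude that it equals the pseudoinverse in question.

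Concretely, for (i) I would check that when $A$ is invertible the operator $A^{-1}$ satisfies the four conditions (each reduces to $A^{-1}A = AA^{-1} = \mathbb{I}$, which is self-adjoint), so uniqueness forces $A^\urcorner = A^{-1}$; the converse is immediate, since writing $A^{-1}$ already presupposes invertibility. For (ii) the key observation is that the four defining conditions for $X = A^\urcorner$ are invariant under interchanging $A$ and $X$ (condition i becomes ii, iii becomes iv, and vice versa), so $A$ itself satisfies the defining conditions of the pseudoinverse of $A^\urcorner$, whence $(A^\urcorner)^\urcorner = A$. For (iii) I would verify that $(A^\urcorner)^\dagger$ satisfies the four conditions with $A^\dagger$ in place of $A$, using $(BC)^\dagger = C^\dagger B^\dagger$ to rewrite products and conditions iii--iv of the definition to transfer self-adjointness; uniqueness then yields $(A^\dagger)^\urcorner = (A^\urcorner)^\dagger$. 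For (v), the operator $AA^\urcorner$ is self-adjoint by condition iii of the definition and idempotent because $(AA^\urcorner)(AA^\urcorner) = (AA^\urcorner A)A^\urcorner = AA^\urcorner$ using $AA^\urcorner A = A$; the same computation with condition iv shows $A^\urcorner A$ is a projection.

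For (iv) and (vi) I would invoke the spectral theorem for the positive semidefinite $A$, writing $A = \sum_k \lambda_k P_k$ with $\lambda_k > 0$ and the $P_k$ mutually orthogonal eigenprojections. I would then check directly that $X = \sum_k \lambda_k^{-1} P_k$ satisfies the four Penrose conditions, so that $A^\urcorner = X$ by uniqueness; since its eigenvalues $\lambda_k^{-1}$ are positive, $A^\urcorner$ is positive semidefinite, which is (iv). The same explicit form gives $AA^\urcorner = A^\urcorner A = \sum_k P_k$, which is exactly the orthogonal projection onto $im(A)$, the span of the eigenvectors with positive eigenvalue; this is (vi). Alternatively, one can argue more abstractly by identifying the range of the projection $AA^\urcorner$ from (v): the identity $AA^\urcorner A = A$ forces $im(AA^\urcorner) = im(A)$, and positivity of $A$ guarantees that $A$ and $A^\urcorner$ commute, so the two projections $AA^\urcorner$ and $A^\urcorner A$ coincide with $P_{im(A)}$.

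The main obstacle is property (vi). While (i)--(iii) and (v) are formal consequences of the Penrose axioms and uniqueness, identifying $AA^\urcorner$ with $P_{im(A)}$ and, crucially, establishing the commutation $AA^\urcorner = A^\urcorner A$ genuinely use positivity of $A$ (equivalently, that $A$ is normal with real spectrum). Without positivity, the projections $AA^\urcorner$ and $A^\urcorner A$ are onto $im(A)$ and $im(A^\dagger)$ respectively, and these subspaces need not agree; it is the spectral decomposition that forces them to coincide in the positive semidefinite case, so I expect the clean execution of (vi) to rest on handling this commutation carefully rather than on any of the axiomatic verifications.
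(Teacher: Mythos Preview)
Your proof is correct. Note, however, that the paper does not actually prove Theorem~\ref{th:pseudo1}: Appendix~C merely states the six properties as standard facts about the Moore--Penrose inverse, with no argument given. So there is no ``paper's own proof'' to compare against.

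On the substance: your algebraic verifications of (i)--(iii) and (v) via the four Penrose conditions plus uniqueness are the standard textbook route and are carried out correctly. The spectral-theorem argument for (iv) and (vi) is likewise standard and correct; the explicit form $A^\urcorner=\sum_k\lambda_k^{-1}P_k$ immediately yields positivity and the identification $AA^\urcorner=A^\urcorner A=\sum_k P_k=P_{im(A)}$. Your closing observation that for a general (non-normal) $A$ the two projections $AA^\urcorner$ and $A^\urcorner A$ land on $im(A)$ and $im(A^\dagger)$ respectively, and that positivity is what forces these to coincide, is exactly the right diagnosis of where the hypothesis $A\in\mathcal{B}(\mathcal H)^+$ is genuinely used.

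One small wording quibble on (i): the ``iff'' in the paper's statement is slightly awkward since writing $A^{-1}$ already presupposes invertibility, as you note; a cleaner reading is ``$A$ is invertible iff $A^\urcorner$ is a two-sided inverse of $A$,'' and your argument covers that reading as well.
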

%----------

\section{An example of the PGM bound and tensor copies.}
\label{sec:appendixD}
\begin{theorem}

Let us consider the special quantum datasets $ R=\{\rho_1,\rho_2\}$ and $S=\{\sigma_1,\sigma_2\}$ (representing 2 classes) where $\rho_i$ and $\sigma_i$ are diagonal matrices, i.e. 
$\rho_i=\begin{pmatrix}
 1-r_i & 0 \\
                       0 & r_i
                     \end{pmatrix}$
                     and
                     $\sigma_i=\begin{pmatrix}
 1-s_i & 0 \\
                       0 & s_i
                     \end{pmatrix}$, with $0\leq r_i,s_i \leq 1$.

                     Also lets consider the datasets $R^{(2)}=\{\rho_1\otimes\rho_1,\rho_2\otimes\rho_2\}$ and $S^{(2)}=\{\sigma_1\otimes\sigma_1,\sigma_2\otimes\sigma_2\}$.
                     
                     Let us calculate the quantum centroids for the sets $R$ and $S$:   $\rho=\frac{1}{2}(\rho_1+\rho_2)$ and $\sigma=\frac{1}{2}(\sigma_1+\sigma_2)$ and, analogously, for the sets $R^{(2)}$ and $S^{(2)}$: 
$\rho^{(2)}=\frac{1}{2}(\rho_1\otimes\rho_1+\rho_2\otimes\rho_2)$ and $\sigma^{(2)}=\frac{1}{2}(\sigma_1\otimes\sigma_1+\sigma_2\otimes\sigma_2).$
                     
Let ${\tt PGM}_b$ and ${\tt PGM}_b^{(2)}$, the Pretty Good measurement bound built over $\rho$ and $\sigma$ and the Pretty Good measurement bound built over $\rho^{(2)}$ and $\sigma^{(2)}$, respectively. 

We show that ${\tt PGM}_b<{\tt PGM}_b^{(2)}$ for any initial sets $R$ and $S$.
\end{theorem}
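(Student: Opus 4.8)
The plan is to exploit the fact that every operator in sight is diagonal, so that the whole Pretty Good measurement construction collapses to a classical computation and the bound ${\tt PGM}_b$ acquires a transparent closed form. Writing the two quantum centroids as $\rho=\mathrm{diag}(a_1,\ldots,a_N)$ and $\sigma=\mathrm{diag}(b_1,\ldots,b_N)$ with equal priors $p_1=p_2=\tfrac12$, the average state $\omega=\tfrac12(\rho+\sigma)$ is diagonal, its square-root pseudoinverse is diagonal, and hence each $F_i$ (defined as in Eq.~(\ref{F})) is diagonal with entries $[F_1]_{kk}=a_k/(a_k+b_k)$ and $[F_2]_{kk}=b_k/(a_k+b_k)$; on the kernel of $\omega$ both $a_k$ and $b_k$ vanish, so the $P_{ker(\sigma)}$ correction never contributes to the relevant traces. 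First I would substitute these into ${\tt PGM}_b=\tfrac12\Tr(F_1\rho)+\tfrac12\Tr(F_2\sigma)$ to obtain
$$
{\tt PGM}_b=\frac12\sum_k\frac{a_k^2+b_k^2}{a_k+b_k}=1-\sum_k\frac{a_kb_k}{a_k+b_k},
$$
the second equality using $a_k^2+b_k^2=(a_k+b_k)^2-2a_kb_k$ together with $\sum_k a_k=\sum_k b_k=1$.

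Next I would isolate the only quantity that matters. Define the overlap functional $G(\rho,\sigma)=\sum_k \frac{a_kb_k}{a_k+b_k}$, so that ${\tt PGM}_b=1-G(\rho,\sigma)$ and ${\tt PGM}_b^{(2)}=1-G(\rho^{(2)},\sigma^{(2)})$. The desired inequality ${\tt PGM}_b<{\tt PGM}_b^{(2)}$ is therefore equivalent to a strict decrease of the overlap, $G(\rho^{(2)},\sigma^{(2)})<G(\rho,\sigma)$. The crucial structural input is that the single-copy centroids are the partial traces of the two-copy centroids: exactly as in Eqs.~(\ref{pTr1-1})--(\ref{pTr1-2}), linearity of the partial trace gives $\rho=\mathrm{pT}_1(\rho^{(2)})$ and $\sigma=\mathrm{pT}_1(\sigma^{(2)})$. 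In diagonal terms, if $A_{kl}$ and $B_{kl}$ denote the diagonal entries of $\rho^{(2)}$ and $\sigma^{(2)}$ indexed by basis pairs $(k,l)$, then $a_k=\sum_l A_{kl}$ and $b_k=\sum_l B_{kl}$.

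The heart of the argument is then a single convexity fact about the parallel-sum function $f(a,b)=\frac{ab}{a+b}$. This $f$ is concave on the positive quadrant and positively homogeneous of degree one, and any such function is superadditive, since $f(x+y)=2f\!\left(\tfrac{x+y}{2}\right)\ge f(x)+f(y)$. Applying this coordinatewise, for each fixed $k$, to the nonnegative vectors $(A_{kl})_l$ and $(B_{kl})_l$ gives $f(a_k,b_k)=f\!\left(\sum_l A_{kl},\sum_l B_{kl}\right)\ge \sum_l f(A_{kl},B_{kl})$, and summing over $k$ yields precisely $G(\rho,\sigma)\ge G(\rho^{(2)},\sigma^{(2)})$, hence ${\tt PGM}_b\le {\tt PGM}_b^{(2)}$. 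I would record the superadditivity either from this concave-plus-homogeneous principle or, if a self-contained derivation is preferred, by clearing denominators in the two-term case and extending by induction.

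The delicate point, and the main obstacle to the literal ``for any $R$ and $S$'' phrasing, is strictness. Equality in the superadditivity of $f$ forces, for every $k$, the conditional vectors $(A_{kl})_l$ and $(B_{kl})_l$ to be proportional. I would translate this proportionality back into the parameters $r_1,r_2,s_1,s_2$: matching the proportionality constants across the two values of $k$ forces $\rho^{(2)}=\sigma^{(2)}$, which (comparing first and second moments of the diagonal entries) holds only when $\{r_1,r_2\}=\{s_1,s_2\}$, i.e. when the two classes coincide, while the remaining branch forces all four states to be pure. Thus strict inequality holds whenever the two classes are genuinely distinct and nondegenerate. Carrying out this equality analysis cleanly, rather than the inequality itself, is where the real work lies, and I expect it to require either a mild genericity hypothesis or an explicit enumeration of these boundary configurations.
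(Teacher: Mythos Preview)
Your approach is correct for the non-strict inequality ${\tt PGM}_b\le {\tt PGM}_b^{(2)}$ and is genuinely different from the paper's. The paper proceeds by brute-force symbolic computation: it writes out ${\tt PGM}_b$ explicitly as a rational function of $r_1,r_2,s_1,s_2$, does the same for the two-copy bound, decomposes the difference ${\tt PGM}_b^{(2)}-{\tt PGM}_b$ into four explicit rational terms $A+B+C+D$, and then locates the minimum by solving $\nabla({\tt PGM}_b^{(2)}-{\tt PGM}_b)=0$, finding a single interior critical point at $r_1=r_2=s_1=s_2$ where the difference vanishes. Your route instead extracts the closed form ${\tt PGM}_b=1-\sum_k a_kb_k/(a_k+b_k)$, recognizes the single-copy centroids as partial traces of the two-copy centroids, and reduces everything to the superadditivity of the parallel-sum $f(a,b)=ab/(a+b)$, which you obtain from concavity plus degree-one homogeneity. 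This buys you a cleaner and much more general statement: your argument works verbatim for any number of tensor copies, for diagonal centroids in any dimension, and it makes transparent \emph{why} the bound is monotone rather than merely verifying it on a four-parameter family.

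On strictness you are more careful than the paper itself. The paper asserts a strict inequality ``for any initial sets $R$ and $S$'' but its own proof exhibits a point where the difference equals zero, so the literal statement cannot hold without excluding degenerate configurations. Your equality analysis---tracing equality in the superadditivity back to proportionality of the conditional vectors and hence to $\{r_1,r_2\}=\{s_1,s_2\}$ or to the pure-state boundary---is the right way to repair this, and your expectation that a mild genericity hypothesis is needed is exactly correct.
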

\begin{proof}
Following the standard procedure of the $PGM$ formalism, we first calculate the average state $$H=\frac{1}{2}(\rho + \sigma).$$

\noindent Then, we calculate:
$$G_{\rho}=\frac{1}{2}\sqrt{H^{\neg}}\rho\sqrt{H^{\neg}}+\frac{1}{2}Ker(H)$$ and 
$$G_{\sigma}=\frac{1}{2}\sqrt{H^{\neg}}\sigma\sqrt{H^{\neg}}+\frac{1}{2}Ker(H)$$

\noindent where $\neg$ represents the pseudoinverse function. Let us notice that, given the fact that $H$ has a diagonal form, then the pseudoinverse coincides with the inverse and consequently $Ker(H)=0$.

Now let us calculate ${\tt PGM}_b=\frac{1}{2}({\Tr}(G_{\rho}\rho)+{\Tr}(G_{\sigma}\sigma))$. By a straightforward calculation, it follows that: $${\tt PGM}_b=\frac{2((r_1+r_2)(s_1+s_2-1)-s_1-s_2)}{(r_1+r_2+s_1+s_2)(r_1+r_2+s_1+s_2-4)}.$$ 
Following the same strategy, it is possible to define $H^{(2)}$, $G_{\rho}^{(2)}$, $G_{\sigma}^{(2)}$ and ${\tt PGM}_b^{(2)}$. It can be easily shown that the difference between ${\tt PGM}_b^{(2)}$ and ${\tt PGM}_b$ can be written as:
${\tt PGM}_b^{(2)}-{\tt PGM}_b=\frac{1}{2}(A+B+C+D),$ where

$$A= \frac{(r_1(r_1-2)+r_2(r_2-2)+2)^2}{r_1(r_1-2)+r_2(r_2-2)+s_1(s_1-2)+s_2(s_2-2)+4};$$

$$B=\frac{2(r_1(r_1-1)+r_2(r_2-1))^2}{r_1(1-r_1)+r_2(1-r_2)+s_1(1-s_1)+s_2(1-s_2)};$$

$$C=\frac{(r_1+r_2)^2}{r_1^2+r_2^2+s_1^2+s_2^2};$$

$$D=\frac{4((r_1+r_2)(s_1+s_2)-r_1-r_2-s_1-s_2)}{(r_1+r_2+s_1+s_2)(4-r_1-r_2-s_1-s_2)}.$$

\bigskip

% Trivially follows that both $A$, $B$ and $C$ are non-negative $\forall (r_1,r_2,s_1,s_2)\in(0,1)$ but also follows that $D\leq 0$ in the same interval.
It trivially follows that $\forall (r_1,r_2,s_1,s_2)\in(0,1)$, $A$, $B$ and $C$ are non-negative and $D\leq 0$.

It can be seen that $Grad({\tt PGM}_b^{(2)}-{\tt PGM}_b)=0$ for $r_1=r_2=s_1=s_2\simeq 0.7931$ and for these values (which provides the minimum difference) ${\tt PGM}_b^{(2)}-{\tt PGM}_b=0$. Hence our claim holds.

\end{proof}

In an analogous way, it is possible to define ${\tt PGM}_b^{(3)}$, ${\tt PGM}_b^{(4)}$ and so on. We empirically study the effect of increasing the number of tensor copies (i.e. increasing the dimension of the feature space) on the PGM bound via numerical simulations based on $10^4$ random 2-feature vectors. The simulation results are shown in Fig.~\ref{fig:simulation}.
% By following an empirical approach, a numerical simulation based on $1000$ random two-feature vectors shows the following results: 

\begin{figure}[t]
    \centering
    \includegraphics[width=0.99\columnwidth]{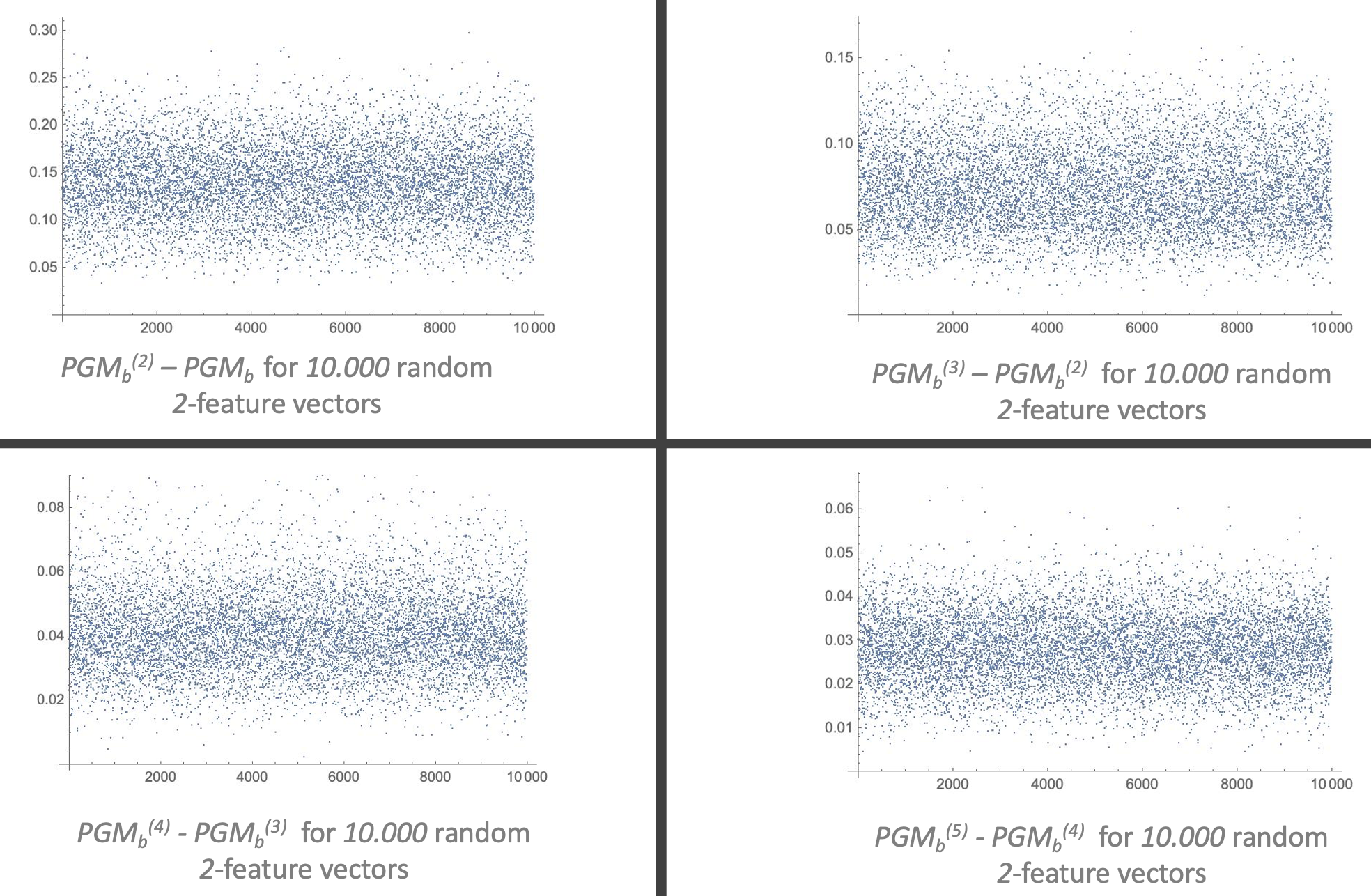}
    \caption{\label{fig:simulation} Here we show four simulations given by considering, each time, $10^4$ random 2-feature vectors and by calculating (for each case) the value of the quantity ${\tt PGM}_b^{(n+1)} -{\tt PGM}_b^{(n)}$, for $n\in\{1,4\}$. For all the instances it is verified that ${\tt PGM}_b^{(n+1)}\geq {\tt PGM}_b^{(n)}$.}
\end{figure}
%
\begin{comment}
 For the journal submission, we should perhaps also plot the mean (and the error bar could be determined from the standard deviation) for $PGM_b^{n}$ - $PGM_b$ as a function of $n$ to observe a scaling behaviour (similar to Figure 2).
\end{comment}
 
These results show how ${\tt PGM}_b^{(n+1)}\geq {\tt PGM}_b^{(n)}$. A formal and general proof of this observation remains a future work.

\end{appendix}

\section*{Acknowledgments}
D.K.P. gratefully acknowledges support from the National Research Foundation of Korea (No. 2019R1I1A1A01050161), and Quantum Computing Development Program (No. 2019M3E4A1080227).
R.G gratefully acknowledges support from the R.G. is grateful to the RAS (Regione Autonoma
della Sardegna) (project code: RASSR40341, A Semantic Extension of Quantum Computationsl Logic). G.S. is grateful to Fondazione di Sardegna (project code: F71I17000330002) and to the Prin project ``Logic and Cognition. Theory, experiments, and applications". 
%\newpage

% \bibliographystyle{unsrt}
% \bibliography{references}

\end{document}